\newtheorem{lemma}{Lemma}{\bfseries}{\itshape}
\newcommand{\keywords}[1]{\par\addvspace\baselineskip
\noindent\keywordname\enspace\ignorespaces#1}
\let\oldproof\proof
\title{Testing Simultaneous Planarity when the Common Graph is 2-Connected}
\author{Bernhard Haeupler
\thanks{CSAIL, MIT, Cambridge, MA. Email:{\tt haeupler@mit.edu}}
\and Krishnam Raju Jampani \thanks{David R. Cheriton School of Computer Science, University of Waterloo, Waterloo, ON,  Email:{\tt krjampan@uwaterloo.ca}}
\and Anna Lubiw \thanks{David R. Cheriton School of Computer Science, University of Waterloo, Waterloo, ON, Email:{\tt alubiw@uwaterloo.ca}} }
\begin{document}
\DeclareGraphicsExtensions{.pdf, .png, .gif, .jpg}

\pagestyle{plain}
\pagenumbering{arabic}
\maketitle

\newcommand{\mr}{\ensuremath{\mathbf{\eta}}}
\newcommand{\len}{\ensuremath{t}}
\newcommand{\minor}{\ensuremath{\preceq}}
\newcommand{\lrangle}[1]{\ensuremath{\langle #1 \rangle }}
\newcommand{\floor}[1]{\ensuremath{\lfloor #1 \rfloor }}
\newcommand{\ceil}[1]{\ensuremath{\lceil #1 \rceil }}
\newcommand{\cart}{\ensuremath{\,\Box\,}}
\newcommand{\brac}[1]{\ensuremath{\lbrace #1 \rbrace }}
\newcommand{\ignore}[1]{}

\begin{abstract}
Two planar graphs $G_1$ and $G_2$ sharing some vertices and edges are \emph{simultaneously planar} if they have planar drawings such that a shared vertex [edge] is represented by the same point [curve] in both drawings.
It is an open problem whether simultaneous planarity can be tested efficiently.  We give a linear-time algorithm to test simultaneous planarity when the two graphs share a 2-connected subgraph. 
Our algorithm extends to the case of $k$ planar graphs where each vertex [edge] is either common to all graphs or belongs to exactly one of them, and the common subgraph is 2-connected.

\noindent{\bf Keywords: Simultaneous Embedding, Planar Graph, PQ Tree, Graph Drawing}
\end{abstract}

\section{Introduction}

Let $G_1 = (V_1,E_1)$ and $G_2 = (V_2,E_2)$ be two graphs 
sharing some vertices and edges. The simultaneous planar embedding problem asks whether
there exist planar embeddings for $G_1$ and $G_2$ such that, in the two embeddings, each vertex $v \in V_1 \cap V_2$ is
mapped to the same point and each edge $e \in E_1 \cap E_2$ is mapped to the same
curve.  
We show that this problem can be
solved efficiently when the common graph $(V_1 \cap V_2, E_1 \cap E_2)$ is 2-connected.

The study of planar graphs has a long history and has generated many deep results~\cite{NC,MT,NR}.
There is hope that some of the structure of planarity may carry over to simultaneous planarity. 
A possible analogy is with matroids, where optimization results carry over from one matroid to the intersection of two matroids~\cite{CCPS}. 
On a more practical note, simultaneous planar embeddings are valuable for visualization purposes when two related graphs need to be displayed.  
For example,  
the two graphs may represent different relationships on the same node set, 
or they may be the ``before'' and ``after'' versions of a graph that has changed over time.

Over the last few years there has been a lot of work on simultaneous planar embeddings~\cite{DGL,EK,FGJMS,FJKS,F,GJPSS,JS,ADFPR,ADBF+}, often under the name ``simultaneous embeddings with fixed edges''. We mention a few results here and give a more detailed description in the Background section below. A major open question is whether simultaneous planarity of two graphs can be tested in polynomial time. The problem seems to be right on the feasibility boundary. The problem is NP-complete for three graphs~\cite{GJPSS} and the version where the planar drawings are required to be straight-line is already NP-hard for two graphs and only known to lie in PSPACE~\cite{EGJ+}. On the other hand, several classes of (pairs of) graphs are known to always have simultaneous planar embeddings~\cite{EK,DGL,F,FJKS,JS} and there are efficient algorithms to test simultaneous planarity for some very restricted graph-classes: biconnected outerplanar graphs~\cite{FJKS}, the case where one graph has at most one cycle~\cite{FGJMS} or the case where the common graph is a star~\cite{ADFPR}. The last result me mention here is that the problem of whether a planar embedding of a subgraph can be extended to a planar embedding of the whole graph is equivalent to testing simultaneous planarity between a planar graph and a 3-connected graph, and a linear time algorithm for these problems was given by Angelini et al.~\cite{ADBF+}.

In this paper we give 
a linear time algorithm to 
test simultaneous planarity of any two graphs that share a 2-connected subgraph.
Our algorithm builds on the planarity testing algorithm of Haeupler and Tarjan~\cite{HT}, which in turn unifies the planarity
testing algorithms of Lempel-Even-Cederbaum~\cite{LEC}, Shih-Hsu~\cite{SH} and Boyer-Myrvold~\cite{BM}. 
We show how to extend our algorithm 
to the case of $k$ graphs where each vertex [edge] is either common to all graphs or belongs to exactly one of them, and the common graph is 2-connected. Although we concentrate on the decision version of the problem, we also 
show that two simultaneous planar graphs have a simultaneous planar drawing in which one graph is straight-line planar, and each edge of the other graph is drawn as a polygonal line with at most $| V_1 \cap V_2 |$ bends.

Independently (and at the same time as our conference version~\cite{HLJ-ISAAC}), Angelini et al.~\cite{ADFPR} gave an efficient algorithm to test simultaneous planarity of any two graphs that share a 2-connected subgraph.  
Their algorithm is based on SPQR-trees, takes linear time (in the final version~\cite{ADFPR-f}), and is restricted to the case where the two graphs have the same vertex set. Very recently Bl{\"a}sius and Rutter~\cite{BR} have generalized our idea of simultaneous PQ trees, and 
shown how to test simultaneous planarity of two biconnected graphs that share a connected subgraph.  This is a substantial strengthening of our result.

The paper is organized as follows: Section \ref{sec:background} gives more background and related work. In Section~\ref{section:PQ} we review and develop some techniques for PQ-trees, which are needed for our algorithm, and in 
Section~\ref{section:planar} we review the Haeupler-Tarjan planarity testing algorithm. 
Section~\ref{sec:simplanarity} contains our simultaneous planarity testing algorithm, including the extension to $k$ graphs, and the result 
on drawing two simultaneous planar graphs.

\subsection{Background}\label{sec:background}

Versions of simultaneous planarity have received much attention in recent years. Brass et al.~\cite{BCD+} introduced the concept of \emph{simultaneous geometric embeddings} of a pair of graphs---these are planar straight-line drawings such that any common vertex is represented by the same point.  Note that a common edge will necessarily be represented by the same line segment.  
It is NP-hard to test if two graphs have simultaneous geometric embeddings~\cite{EGJ+}.  For other work on simultaneous geometric embeddings see~\cite{AGKN} and its references.
 
The generalization to planar drawings where edges are not necessarily drawn as straight line segments, but any common edge must be represented by  the same curve was introduced by Erten and Kobourov~\cite{EK} and called \emph{simultaneous embedding with consistent edges}.  
Most other papers follow the conference version of Erten and Kobourov's paper and use the term \emph{simultaneous embedding with fixed edges (SEFE)}.   
In our paper we use the more self-explanatory term ``simultaneous planar embeddings.''  A further justification for this nomenclature is that there are combinatorial conditions on a pair of planar embeddings that are equivalent to simultaneous planarity.
Specifically, J{\"u}nger and Schultz give a characterization in terms of ``compatible embeddings'' [Theorem 4 in~\cite{JS}].  Specialized to the case where the common graph is connected, their result says that 
two planar embeddings are simultaneously planar if and only if the cyclic orderings of common edges around common vertices are the same in both embeddings.

Several papers~\cite{EK,DGL,F} show that pairs of graphs from certain restricted classes always have simultaneous planar embeddings, the most general result being that any planar graph has a simultaneous planar embedding with any tree~\cite{F}. On the other hand, there is an example of two outerplanar graphs that have no simultaneous planar embedding~\cite{F}. The graphs that have simultaneous planar embeddings when paired with any other planar graph have been characterized~\cite{FJKS}. In addition, J{\"u}nger and Schultz~\cite{JS} characterize the common graphs that permit simultaneous planar embeddings no matter what pairs of planar graphs they occur in. 

There are efficient algorithms to test simultaneous planarity for pairs of biconnected outerplanar graphs~\cite{FJKS}, and for pairs consisting of a planar graph and a graph that is either 3-connected~\cite{ADBF+} or has at most one cycle~\cite{FGJMS}. Testing simultaneous planarity when the common graph is a tree was shown to be equivalent to a book-embedding problem which together with~\cite{HN} implies that testing simultaneous planarity when the common graph is a star can be done in linear time~\cite{ADFPR}.

There is another, even weaker form of simultaneous planarity, where common vertices must be represented by common points, but the planar drawings are otherwise completely independent, with edges drawn as Jordan curves. Any set of planar graphs can be represented this way by virtue of the result that a planar graph can be drawn with any fixed vertex locations~\cite{PW}. 

The idea of  ``simultaneous graph representations'' has also been applied to intersection representations.
For the case of interval graphs, the problem is to find representations for two graphs as intersection graphs of intervals in the real line, with the property that a common vertex is represented by the same interval in both representations.  This can be done in polynomial time~\cite{JL}. There are also polynomial time algorithms to find simultaneous representations of pairs of
chordal, comparability and permutation graphs~\cite{JL2}.

\section{PQ-trees}
\label{section:PQ}


Many planarity testing algorithms in the literature~\cite{HT,LEC,SH,BM} use PQ-trees (or a variation) to obtain a linear-time implementation. PQ-trees were discovered by Booth and Lueker~\cite{BL} and are used, not only for planarity testing, but for many other applications like recognizing interval graphs or testing matrices for the consecutive-ones property. 
We first review PQ-trees and then in Subsection~\ref{sec:int-PQ} we show how to manipulate pairs of PQ-trees. 

A PQ-tree represents the permutations of a set of elements satisfying a family of constraints. Each constraint 
specifies that a certain subset of elements must appear consecutively in any permutation. PQ-trees are rooted trees with 
internal nodes being labeled `P' or `Q', and are drawn using a circle or a rectangle (or double circles in \cite{HM}), respectively.
The leaves of a PQ-tree  correspond to the elements whose orders are represented. 
PQ-trees are equivalent under arbitrary reorderings of the children of a P-node and reversals of the order of children of a Q-node. We consider a node with two children 
to be a Q-node.
A leaf-order of a PQ-tree is the order in which its leaves are visited in an in-order traversal of the tree. 
The set of permutations represented by a PQ-tree is the set of leaf-orders of equivalent PQ-trees. 
Given a PQ-tree tree $T$ on a set $U$ of elements, adding a consecutivity
constraint on a set $S \subseteq U$,  {\em reduces} $T$ to a PQ-tree $T'$, such that the leaf-orders of $T'$ are 
precisely the leaf-orders of $T$ in which the elements of $S$ appear consecutively. Booth and Lueker~\cite{BL} gave an efficient implementation of PQ-trees that supports this operation in amortized $O(|S|)$ time. Their implementation furthermore allows an efficient complement-reduction, i.e., adding the constraint on a set $S$ when $U \setminus S$ is given in amortized $O(|U \setminus S|)$ time~\cite{HT}.

Although PQ-trees were invented to represent linear orders, they can be reinterpreted to represent circular orders as well~\cite{HT}:
Given a PQ-tree we imagine that there is a new special leaf $s$ attached as the ``parent'' of the root. A circular leaf order of the augmented tree is a circular order that begins at the special leaf, followed by a linear order of the remaining PQ-tree and ending at the special leaf. Again a PQ-tree represents all circular leaf-orders of equivalent PQ-trees. It is easy to see that a consecutivity constraint on such a set of circular orders directly corresponds to a consecutivity constraint on the original set of linear leaf-orders and a reduction on the circular orders corresponds to a standard or complement reduction. Note that using PQ-trees for circular orders requires solely this different view on PQ-trees but does not need any change in their implementation. As such it turns out that PC-trees introduced in~\cite{HM} are the exact same data structure as PQ-trees albeit with this circular interpretation.

\subsection{Intersection and projection of PQ-trees}
\label{sec:int-PQ}

In this section we develop simple techniques to obtain consistent orders from two PQ-trees. More precisely when two PQ-trees share some but not (necessarily) all leaves, we want to find a permutation represented by each of them with a consistent ordering on the shared leaves.  

Note that the set of such consistent orderings is not representable by a PQ-tree.  
For example, consider the ground set $\{1, 2, 3, 4\}$ and the constraint that  $\{2,3\}$ must be consecutive among $\{1,2,3\}$.  This is why we restrict our goal to finding one permutation of each PQ-tree such that the two permutations are consistent on the shared leaves.

The idea 
is to first \emph{project} both PQ-trees to the common elements, intersect the resulting PQ-trees, pick one remaining order and finally ``lift'' this order back. We now describe the individual steps of this process in more detail.  

The {\em projection} of a PQ-tree on a subset of its leaves $S$ is a PQ-tree obtained by deleting all elements not
in $S$ and simplifying the resulting tree. Simplifying a tree means that we (recursively) delete any internal node
that has no children, and delete any node that has a single child by making the child's grandparent become its
parent. This can easily be implemented in linear time.

Given two PQ-trees on the same set of leaves (elements) we define their {\em intersection} to be the PQ-tree $T$ that represents exactly all orders that are leaf-orders in both trees. This intersection can be computed in polynomial time as follows. 

\begin{enumerate}
\item Initialize $T$ to be the first PQ-tree.
\item For each P-node in the second PQ-tree, reduce $T$ by adding a consecutivity constraint on all the P-node's descendant leaves.
\item For each Q-node in the second tree, and for each pair of adjacent children of it, reduce $T$ by adding a consecutivity
constraint on all the descendant leaves of the two children.
\end{enumerate}

Using the efficient PQ-tree implementation, computing such an intersection be sped up to amortized linear time in the size of the two PQ-trees. This is a relatively straight-forward extension of the PQ-tree reduction and described in detail in Booth's thesis~\cite{Booth}. 

These two operations are enough to achieve our goal. Given two PQ-trees $T_1$ and $T_2$ defined on different element (leaf) sets, 
we define $S$ to be the set of common elements. Now we first construct the projections of both PQ-trees on $S$ and then compute 
their intersection $T$ as described above. Any permutation of $S$ represented by $T$ can now easily be ``lifted'' back to 
permutations of $T_1$ and of $T_2$ that respect the chosen ordering of $S$. Furthermore, any two permutations of $T_1$ and 
$T_2$ that are consistent on $S$ can be obtained this way. 

We note that techniques to ``merge'' PQ-trees were also presented by J\"unger and Leipert~\cite{JLjgaa} in work on level planarity.
Their merge is conceptually and technically different from ours in that the result of their merge is a single PQ-tree
whereas we compute two orderings that are consistent on common elements. The set of all these consistent orderings cannot be captured by a PQ-tree.

\section{Planarity testing}
\label{section:planar}
\label{sec:planar}

In this Section, we review the recent algorithm of Haeupler and Tarjan~\cite{HT} for testing the planarity of a graph. Next
we extend it to an algorithm for testing simultaneous planarity. We first begin with some basic definitions.

Let $G=(V,E)$ be a connected graph on vertex set $V = \brac{v_1,\cdots,v_n}$ and let ${\cal O}$ be an ordering of the vertices of $V$.
An edge $v_iv_j$ is an \emph{in-edge} of $v_i$ (in ${\cal O}$) if $v_j$ appears before $v_i$ in ${\cal O}$, and  $v_iv_j$ is
an \emph{out-edge} of $v_i$ if $v_j$ appears after $v_i$ in ${\cal O}$.

    An st-ordering of $G$ is an ordering ${\cal O}$ of the vertices of $G$, such that the first vertex of ${\cal O}$ is adjacent
to the last vertex of ${\cal O}$ and every intermediate vertex has an in-edge and an out-edge. It is well-known that $G$ has an st-ordering
if and only if it is 2-connected. Further, an st-ordering can be computed in linear time~\cite{ET}.





The planarity testing algorithm of Haeupler and Tarjan embeds vertices (and their adjacent edges) one at a time and
maintains all possible partial embeddings of the embedded subgraph at each stage. Edges with two, one or no endpoint(s) embedded are called (fully-)embedded, half-embedded and non-embedded respectively. For the correctness of the algorithm it is crucial that the vertices are added in a leaf-to-root order of a spanning-tree. This guarantees that, at any time, the remaining vertices induce a connected graph and hence lie in a single face of the partial embedding which can be thought of as the outer face. The crucial observation is that now (for the sake of extending this embedding) a partial embedding can be completely characterized by describing, for each connected component, the cyclic order of the half-embedded edges around it. Furthermore, PQ-trees can be used in a natural way to describe the set of all possible cyclic orders and thus all possible partial planar embeddings. Before describing how to update the PQ-trees corresponding to sets of partial embeddings when a vertex is added, we first discuss the choice of the specific embedding order we use. In general using either a leaf-to-root order of a depth-first spanning tree or an st-order leads to particularly simple implementations that run in linear-time. Indeed these two orders are essentially the only two orders in which the algorithm runs in linear-time using the standard PQ-tree implementation. Our simultaneous planarity algorithm will use a mixture of the two orders: We first add the vertices that are contained in only one of the graphs using a depth-first search order and then add the common vertices using an st-ordering. Note that if all vertices are common, both $G_1$ and $G_2$ will be 2-connected and have a common st-order. In this case we only use this st-order. We now give an overview of how the update steps of the planarity test work for each of these orderings. \\

\noindent
{\bf Leaf-to-root order of a depth-first spanning tree:} \\
Let $v_1,v_2,\cdots,v_n$ be a leaf-to-root order of a depth-first spanning tree of $G$. Note that at stage $i$, the
vertices $\brac{v_1, \cdots, v_i}$ may induce several components. We maintain a PQ-tree for each component representing the set of possible circular orderings of its out-edges. Using a depth-first spanning tree, in contrast to an arbitrary spanning tree, has the advantage that we can easily maintain the invariant that the edge to the smallest node greater than $i$ will be the special leaf. 

Adding $v_{i+1}$ can lead to merging several components into one. It is easy to see, e.g., in the left part of Figure \ref{FIG:step}, that after embedding $v_{i+1}$ the edges of each merging component that go to $v_{i+1}$ have to be consecutive since otherwise a half-embedded edge would be enclosed in the newly formed component. To go to the next stage, we thus first reduce each PQ-tree corresponding to such a component by adding a consecutivity constraint that requires the set of out-edges that are incident to $v_{i+1}$ to be consecutive. We then delete these edges since they are now fully embedded. By the invariant stated above the special leaf is among these edges. Note that the resulting PQ-tree for a component now represents the set of all possible linear-orders of the out-edges that are not incident to $v_{i+1}$. Now we construct the PQ-tree for the new merged component including $v_{i+1}$ as follows: 

Let $v_l$ be the parent of $v_{i+1}$ in the depth-first spanning tree. The PQ-tree for the new component consists of the edge $v_{i+1}v_l$ as the special leaf and a new P-node as a root and whose children are all the remaining out-edges of $v_{i+1}$ and the roots of the PQ-trees of the reduced components (similar to the picture in Figure \ref{FIG:setup}). Note that by choosing the edge $v_{i+1}v_l$ as the special leaf we again maintain the above mentioned invariant.

It is easy to verify that these operations capture exactly all possible circular orders of half-embedded edges around the new embedded connected component. Note that if the reduction step fails at any stage then the graph must be non-planar. Otherwise the algorithm concludes that the graph is planar. \\

\noindent
{\bf st-order:} \\
Let $v_1,v_2,\cdots, v_n$ be an st-order of $G$. This order is characterized by the fact that at any stage $i \in \brac{1,\cdots, n-1}$ not just the non-embedded vertices but also the embedded vertices $\brac{v_1,\cdots, v_i}$ induce a connected component. This results in the algorithm having to maintain only one PQ-tree $T_i$ to capture all possible circular orderings of out-edges around this one component. Furthermore since $v_1v_n$ is an out-edge at every stage, it can stay as the special leaf of $T_i$ for all $i$.

The update step for embedding the next vertex is the same as above. At stage $1$, the tree $T_1$ consists of the special leaf $v_1v_n$ and a P-node whose children are all other out-edges of $v_1$. Now suppose we are at a stage $i \in \brac{1,\cdots,n-2}$.  Again it holds that after embedding $v_{i+1}$ the edges incident to $v_{i+1}$ have to be consecutive. We call the set of leaves of $T_i$ that correspond to these edges the {\em black} leaves. To go to the next stage, we thus first reduce $T_i$ so that all black edges appear together. A non-leaf node in the reduced PQ-tree is said to be black if all its descendants are black edges. We next create a new P-node $p_{i+1}$ and add all the out-edges of $v_{i+1}$ as its children. Now $T_{i+1}$ is constructed from $T_i$ as follows:

\noindent
{\bf Case 1:} {\it $T_i$ contains a black node $x$ that is an ancestor of all the black leaves}. 
  We obtain $T_{i+1}$ from $T_i$ by replacing $x$ and all its descendants with $p_{i+1}$. 
  
\noindent
{\bf Case 2:} {\it $T_i$ contains a (non-black) Q-node containing a (consecutive) sequence of black children}. 
  We obtain $T_{i+1}$ from $T_i$ by replacing these black children (and their descendants) with $p_{i+1}$. 
  
This captures again exactly all possible circular orders of half-embedded edges around the newly embedded connected component. Note that as before the graph is non-planar if and only if the reduction step fails at any stage.\\

\section{Simultaneous planarity when the common graph is 2-connected} 
\label{sec:simplanarity}

This section contains our main result, a linear time algorithm for testing simultaneous planarity of two graphs that share a 2-connected subgraph.

Let $G_1=(V_1,E_1)$ and $G_2=(V_2,E_2)$ be two planar connected graphs with $|V_1| = n_1$ and $|V_2|=n_2$.
Note that $G_1[V_1 \cap V_2]$ need not be the same as $G_2[V_1 \cap V_2]$.  
Let $G=(V_1 \cap V_2, E_1 \cap E_2)$ be 2-connected and $n = |V_1 \cap V_2|$.
Let $v_1,v_2,\ldots, v_n$ be an st-ordering
of $V_1 \cap V_2$. We call the edges and vertices of $G$ {\em common} and all other vertices and edges {\em private}.

We say two linear or circular orderings of elements with some common elements are {\em compatible} if the common elements appear in the same relative order in both orderings. Similarly we say two combinatorial planar embeddings of $G_1$ and $G_2$ respectively are {\em compatible} if for each common vertex the two circular orderings of edges incident to it are compatible.

If $G_1$ and $G_2$ have simultaneous planar embeddings, then clearly they have combinatorial planar embeddings that are compatible with each
other. The converse also turns out to be true, if the common edges form a connected graph. This can be easily proved
as follows. Let ${\cal E}_1$ and ${\cal E}_2$ be the compatible combinatorial planar embeddings of $G_1$ and $G_2$
respectively. Let ${\cal E}_p$ be the partial embedding of ${\cal E}_1$ [or ${\cal E}_2$] obtained by restricting $G_1$ [resp. $G_2$]
to the common subgraph. (Note that since ${\cal E}_1$ and ${\cal E}_2$ are compatible, the partial embedding of ${\cal E}_1$ restricted to
the common subgraph is the same as the partial embedding of ${\cal E}_2$ restricted to the common subgraph.) Now we can find a planar
embedding of ${\cal E}_p$ and iteratively extend it to an embedding of ${\cal E}_1$ and an embedding of ${\cal E}_2$ (see Lemma~2 of
J{\"u}nger and Schultz~\cite{JS} for a proof). The two planar embeddings thus obtained are simultaneous planar embeddings and thus it
is enough to compute a pair of compatible combinatorial planar embeddings.

We will find compatible combinatorial planar embeddings by adding vertices one by one.
At any point we will have two sets of PQ-trees 
representing the partial planar embeddings  
of the subgraphs of $G_1$ and $G_2$ induced by the vertices added so far.
Each PQ-tree represents one connected component of the current subgraph of $G_1$ or $G_2$.
In the first phase we will add  
all private vertices of $G_1$ and $G_2$,
and in the second phase we will add the common vertices in an st-order.  
When a common vertex is added, it will appear in two PQ-trees, one for $G_1$ and one for $G_2$ and we must take care to maintain compatibility. Note that if $V_1=V_2$, i.e., when all vertices are common, only the second phase is needed and the whole algorithm will solely operate on two PQ-trees -- one for each graph. 


Before describing the two phases, we give the main idea of maintaining compatibility between two PQ-trees.  
%
%
Recall that in Section~\ref{sec:int-PQ} we found compatible orders for two PQ-trees using
projection and intersection of PQ-trees.  However, we were unable to store a set of compatible orderings as a PQ-tree, which is what we really need, since planarity testing involves building a sequence of PQ-trees as we add vertices one by one.

To address this issue we introduce a boolean ``orientation''  variable attached to each Q-node to encode whether it is ordered forward or backward.  Compatibility is captured by equations relating orientation variables.
At the conclusion of the algorithm, it is a simple matter to see if the resulting set of Boolean equations has a solution. If it does, we use the solution to create compatible orderings of the Q-nodes of the two PQ-trees. Otherwise the graphs do not have simultaneous planar embeddings.

In more detail, we create a Boolean orientation variable $f(q)$ for each Q-node $q$, with the interpretation that $f(q) = {\rm true}$ if and only if $q$ has a
``forward'' ordering. We record the initial ordering of each Q-node in order to distinguish ``forward'' from ``backward''.
During PQ-tree operations,
Q-nodes may merge, and during planarity testing, parts of PQ-trees may be deleted.   We handle these modifications to Q-nodes by the simple expedient of having an orientation variable for each Q-node, and equating the variables as needed.
When Q-nodes $q_1$ and $q_2$ merge, we add the equation $f(q_1) = f(q_2)$ if $q_1$ and $q_2$ are merged in the same order (both forward or both backward), or $f(q_1) = \neg f(q_2)$ otherwise.

We now describe the two phases of our simultaneous planarity testing algorithm.
To process the private vertices of $G_1$ and $G_2$ in the first phase we compute for each graph a reverse depth-first search ordering by contracting $G$ into a single vertex and then running a depth-first search from this vertex. With these orderings we can now run the algorithm of Haeupler and Tarjan for all private vertices as described in Section~\ref{sec:planar}.

Now the processed vertices induce a collection of components, such that each component has an out-edge to a common vertex. Further, the planarity test provides us for each component with an associated PQ-tree representing all possible cyclic orderings of out-edges for that component. For each component we look at the out-edge that goes to the first common vertex in the st-order and re-root the PQ-tree for this component to have this edge represented by the special leaf. This completes the first phase.

For the second phase we insert the common vertices in an st-order. The algorithm is similar to that described in Section~\ref{sec:planar} for an st-order but, in addition, has to take care of merging in the private components as well. We first examine the procedure for a single graph. Adding the first common vertex $v_1$ is a special set-up phase which we describe first; we will describe the general addition below. 

Adding $v_1$ joins some of the private components into a new component $C_1$ containing $v_1$.
 For each of these private components we reduce the corresponding PQ-tree
so that all the out-edges to $v_1$ appear together, and then delete those edges.
Note that due to the re-rooting at the end of the first phase the special leaf is among those edges. Thus the resulting PQ-tree represents the linear orderings of the remaining edges. We now build a PQ-tree representing the circular orderings around the new component $C_1$ as follows: we take $v_1 v_n$ as the special leaf, create a new P-node as a root and
add
all the out-edges of $v_1$ and the roots of the PQ-trees of the merged private components
as children of the root
(see Figure~\ref{FIG:setup}).

\begin{figure}[htbp]
   \centering
   \includegraphics[width=2.2in]{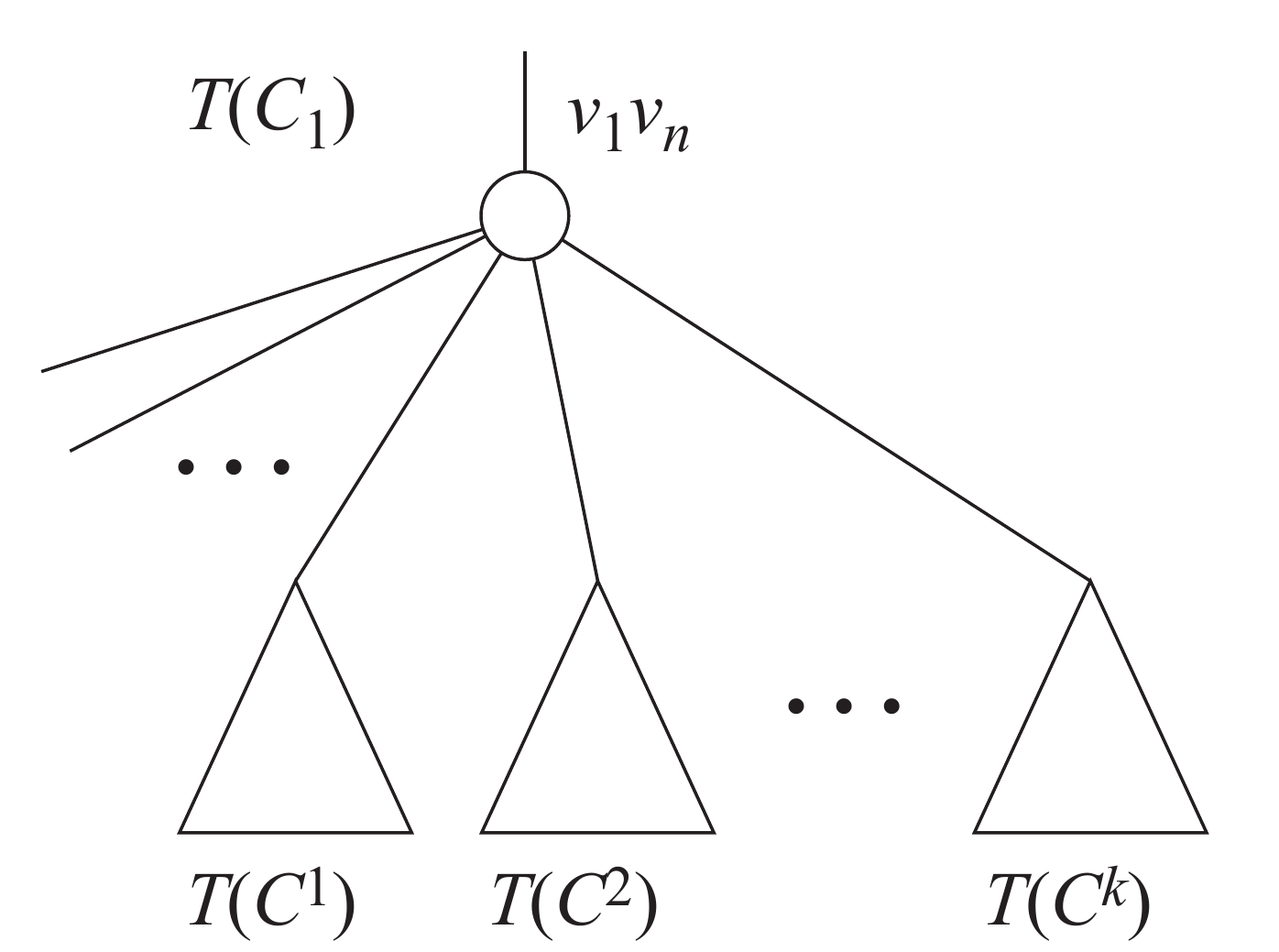}
   \caption{Setting up $T(C_1)$. The P-node's children are the outgoing edges of $v_1$ and the PQ-trees for the components that are joined together by $v_1$.}
   \label{FIG:setup}
\end{figure}

Now consider the situation when we are about to add the common vertex $v_i$, $i \ge 2$. The graph so far may have many connected components but because of the choice of an st-ordering, all common vertices embedded so far are in one component $C_{i-1}$, which we call the \emph{main} component. When we add $v_i$, all components with out-edges to $v_i$ join together to form the new main component $C_i$. This includes $C_{i-1}$ and possibly some private components.
The other private components do not change, nor do their associated PQ-trees.
 

We now describe how to update the PQ-tree $T_{i-1}$ associated with $C_{i-1}$ to form the PQ-tree $T_i$ associated with $C_i$. This is similar
to the approach described in Section~\ref{sec:planar}.  
 We first reduce $T_{i-1}$ so that all the \emph{black} edges (the ones incident to $v_i$) appear together. As before, we call a non-leaf node
in the reduced PQ-tree \emph{black} if all its descendants are black leaves.
For any private component with an out-edge to $v_i$, we reduce the corresponding PQ-tree so that all the out-going edges to $v_i$ appear together,
and then delete those edges.  We make all the roots of the resulting PQ-trees into children of a new P-node $p_i$, and also add all the out-going
edges of $v_i$ as children of $p_i$. It remains to add $p_i$ to $T_{i-1}$ which we do as described below. In the process we also create a
\emph{black tree} $J_i$ that represents the set of linear orderings of the black edges.

\noindent
{\bf Case 1:} {\it $T_{i-1}$ contains a black node $x$ such that all black edges are descendants of $x$}. Let $J_i$ be the subtree rooted at $x$. We obtain $T_i$ from $T_{i-1}$ by replacing $x$ and all its descendants with $p_i$.

\noindent
{\bf Case 2:} { {\it $T_{i-1}$ contains a non-black Q-node $x$ that has a sequence of adjacent black children}.
 We group all the black children of $x$ and add them as children (in the same order) of a new Q-node $x'$. Let
$J_i$ be defined as the subtree rooted at $x'$. We add an equation relating the orientation variables of $x$ and $x'$.
%
We obtain $T_i$ from $T_{i-1}$ by replacing the sequence of black children of $x$ (and their
descendants) with $p_i$ (see Figure~\ref{FIG:step}).

\begin{figure}[htb]
   \centering
   \includegraphics[width=5in]{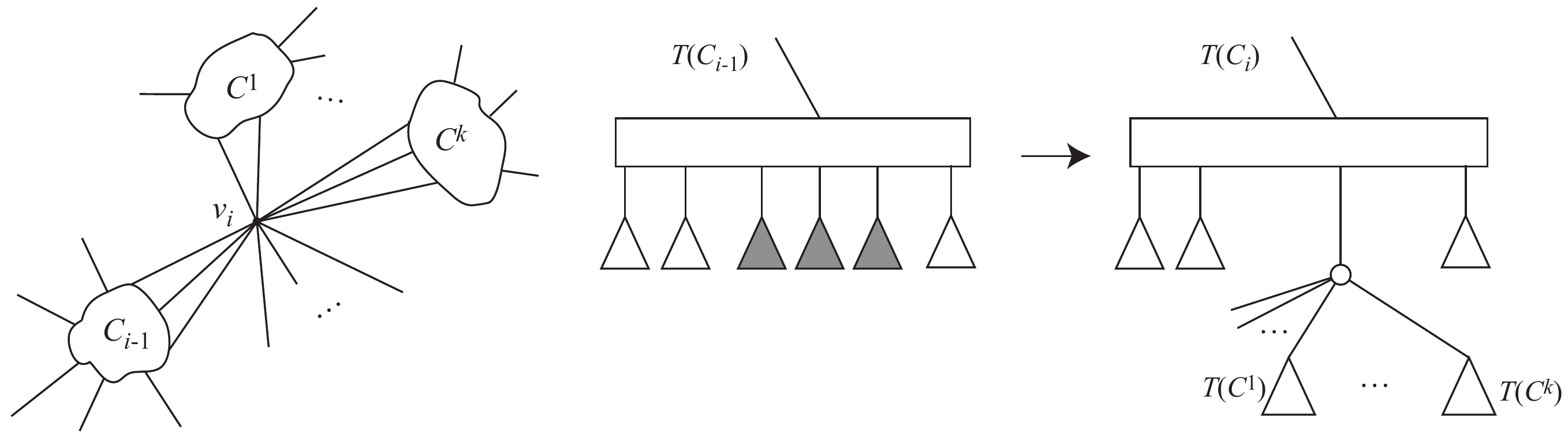}
   \caption{({\it left\/}) Adding vertex $v_i$ which is connected to main component $C_{i-1}$ and to private components $C^1, \ldots ,C^k$.  ({\it right\/}) Creating $T_i$ from $T_{i-1}$ by replacing the black subtree by a P-node whose children are  the outgoing edges of $v_i$ and the
   PQ-trees for the newly joined private components.}
\label{FIG:step}
\end{figure}

Note that we use orientation variables above for a purpose other than compatibility. (We are only working with one graph so far.)  Standard planarity tests would simply keep track of the order of the deleted subtree $J_i$ in relation to its parent.  
Since we have orientation variables anyway, we use them for this purpose.

We perform a similar procedure on graph $G_2$.   We will distinguish the black trees of $G_1$ and $G_2$ using superscripts.  Thus after adding $v_i$  we have black trees $J_i^1$ and $J_i^2$.
It remains to deal with compatibility.  
We claim that it suffices to enforce compatibility between each pair $J_i^1$ and $J_i^2$.

To do so, we perform a {\em unification step} in which we add equations between orientation variables for Q-nodes in the two trees.  \\

\noindent
\textbf{Unification step for stage $i$}

We first project $J_i^1$ and $J_i^2$ to the common edges, as described in Section~\ref{sec:int-PQ}, carrying over orientation variables from each original node to its copy in the projection (if it exists).
Next we create the PQ-tree $R_i$ that is the intersection of these two projected trees as described in Section~\ref{sec:int-PQ}.  Initially $R_i$ is equal to the first tree.  The step dealing with Q-nodes (Step 3) is enhanced as follows:

\begin{description}
\item{3.} For each Q-node $q$ of the second tree, and for each pair $a_1$, $a_2$ of adjacent children of $q$ do the following:  Reduce $R_i$ by adding a consecutivity constraint on all the descendant leaves of $a_1$ and $a_2$.  Find the Q-node that is the least common ancestor of the descendants of $a_1$ and $a_2$ in $R_i$.  Add an equation relating the orientation variable of this ancestor with the orientation variable of $q$ (using a negation if needed to match the orderings of the descendants).

\end{description}

Observe that any equations added during the unification step are necessary.   Thus if the system of Boolean equations is inconsistent at
the end of the algorithm, we conclude that $G_1$ and $G_2$ do not have a compatible combinatorial planar embedding.  
Finally, if the system of Boolean equations has a solution, then we obtain compatible leaf-orders for each pair $J_i^1$ and $J_i^2$ as follows:
Pick an arbitrary solution to the system of Boolean equations. This fixes the truth values of all orientation variables and thus the orientations
of all Q-nodes in all the trees.
Subject to this, choose a leaf ordering
$I$ of $R_i$ (by choosing the ordering of any P-nodes).
$I$ can then be lifted back to (compatible) leaf-orders
of $J_i^1$ and $J_i^2$ that respect the ordering of $I$.
The following lemma shows that this is sufficient to obtain compatible combinatorial planar embeddings of $G_1$ and $G_2$

\begin{lemma}
\label{compatibility}
If the system of Boolean equations has a solution then $G_1$ and $G_2$ have compatible combinatorial planar embeddings.
\end{lemma}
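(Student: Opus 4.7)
The plan is to use a fixed solution of the Boolean system to produce consistent Q\nobreakdash-node orientations and then, by choosing P\nobreakdash-node orderings in a correlated way between the two graphs, to produce a pair of combinatorial planar embeddings of $G_1$ and $G_2$ that agree on the cyclic order of common edges at every common vertex. Once this is done, the cited result of J\"unger and Schultz applies (the common subgraph $G$ is connected, being $2$\nobreakdash-connected by assumption) and gives an actual simultaneous planar drawing, so producing compatible combinatorial embeddings is enough.

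First I would fix the truth values of all orientation variables using the given solution. Because every merge and every deletion of a Q\nobreakdash-node inside either PQ\nobreakdash-tree sequence was recorded by an equation tying the orientation of the new or surviving Q\nobreakdash-node to that of its source, the fixed values determine a consistent orientation for every Q\nobreakdash-node that ever existed; combined with an arbitrary choice of child order at every P\nobreakdash-node, this yields, for each graph $G_k$ separately, a legal leaf-order of every PQ\nobreakdash-tree produced by the Haeupler\nobreakdash--Tarjan algorithm and hence a combinatorial planar embedding of $G_k$. The remaining freedom is the choice of P\nobreakdash-node orders, which we now fix so as to synchronize the two embeddings at every common vertex.

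I would proceed by induction on the st\nobreakdash-order $v_1,\ldots,v_n$. The inductive hypothesis at the start of stage $i$ is that, for every common vertex $v_j$ with $j<i$, the two chosen embeddings already agree on the cyclic order of the common edges at $v_j$, and the P\nobreakdash-node order choices made so far are compatible with the still-unfrozen portions of the two PQ\nobreakdash-trees. At stage $i$ the common edges around $v_i$ split into in\nobreakdash-edges, whose relative orders are exactly the leaf\nobreakdash-orders of $J_i^1$ and $J_i^2$ restricted to common leaves, and out\nobreakdash-edges, each of which becomes an in\nobreakdash-edge at some later stage $j>i$ and will be handled there. Using the fixed Q\nobreakdash-node orientations, the unification step's intersection tree $R_i$ still permits any ordering of its P\nobreakdash-nodes; I pick one such ordering and lift it to leaf\nobreakdash-orderings of $J_i^1$ and $J_i^2$ via the projection\slash intersection machinery of Section~\ref{sec:int-PQ}. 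This gives compatible cyclic orders of the common in\nobreakdash-edges at $v_i$, extending the inductive hypothesis.

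The main obstacle is to argue that the equations collected in the enhanced Step~3 of the unification step capture \emph{all} necessary compatibility constraints, so that an arbitrary extension of the P\nobreakdash-node orderings at stages $\ge i$ can never be forced to conflict with the choices already made. Two ingredients are required: first, that the black tree $J_i^k$ faithfully describes the possible cyclic arrangements of the common edges of $v_i$ on the ``outer'' side of the current embedding (this is the point of the Haeupler\nobreakdash--Tarjan construction, where $J_i^k$ is the subtree deleted when $v_i$ is absorbed, so every common edge at $v_i$ appears as its leaf at exactly the stage it is consumed); and second, that projection and intersection of PQ\nobreakdash-trees introduce no spurious constraints, so that any leaf\nobreakdash-order of $R_i$ genuinely lifts to compatible leaf\nobreakdash-orders of $J_i^1$ and $J_i^2$, the latter fact being exactly what was established in Section~\ref{sec:int-PQ}. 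Together these ensure that the inductive choice at each stage succeeds, and when $i=n$ we have compatible combinatorial planar embeddings of $G_1$ and $G_2$, completing the proof.
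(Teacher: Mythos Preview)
Your overall plan---fix a Boolean solution, orient all Q\nobreakdash-nodes accordingly, then at each stage $i$ pick a leaf order of $R_i$ and lift it to compatible leaf orders of $J_i^1$ and $J_i^2$---matches the paper exactly, and it is the right way to produce the two embeddings. Where your argument diverges is in the induction that certifies the resulting embeddings are compatible.

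Your forward induction does not work as stated. The hypothesis ``for every $v_j$ with $j<i$, the two chosen embeddings already agree on the cyclic order of the common edges at $v_j$'' is ill\nobreakdash-posed at stage $i$: the cyclic order at $v_j$ involves both its in\nobreakdash-edges (fixed by $J_j$) and its out\nobreakdash-edges, and the latter are in general still half\nobreakdash-embedded leaves of $T_{i-1}$ whose relative order has not yet been determined. Your sentence ``out\nobreakdash-edges \ldots\ will be handled there'' conflates two different things: the edge $v_iv_k$ does reappear as an in\nobreakdash-edge at stage $k$, but what gets fixed there is its position in the cyclic order \emph{at $v_k$}, not its position among the out\nobreakdash-edges \emph{at $v_i$}. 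Since the out\nobreakdash-edges of $v_i$ go to many different later vertices, no single later stage settles their relative order around $v_i$.

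The paper closes this gap by running the induction in the opposite direction: it shows, by downward induction from $i=n$, that the ordering of out\nobreakdash-edges around the main component $C_{i-1}$ is compatible between the two graphs. The step is that the out\nobreakdash-edge ordering around $C_{i-1}$ is obtained from that around $C_i$ by substituting the leaf order of $J_i$ for the block occupied by $p_i$; compatibility of the $J_i^k$ plus the inductive hypothesis gives compatibility around $C_{i-1}$. Since every common edge at $v_i$ lies on the boundary of $C_{i-1}$ or $C_i$, this yields compatibility of the full cyclic order at $v_i$. Reversing your induction in this way would make your proof complete.
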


\begin{proof}
The procedure described above produces compatible leaf orders for all pairs of black trees $J_i^1$ and $J_i^2$. Recall that
the leaves of $J_i^1$ (resp. $J_i^2$) are the out-edges of the component $C_{i-1}$ in $G_1$ (resp. $G_2$) and contain all the
common in-edges of $v_i$. Focusing on $G_1$ individually, its planarity test has succeeded, and we have
a combinatorial planar embedding such that the ordering of edges around $v_i$ contains the leaf order of $J_i^1$. Also,
we have a combinatorial planar embedding of $G_2$ such that the ordering of edges around $v_i$ contains the leaf order of $J_i^2$.

The embedding of a graph imposes an ordering of the out-edges around every main component. We can show inductively, starting from $i=n$, that
the ordering of the out-edges around the main component $C_{i-1}$ in $G_1$ is compatible with the ordering of the out-edges in the corresponding
main component in $G_2$. Moreover all the common edges incident to $v_i$ belong to either $C_{i-1}$ or $C_i$. This implies that in both embeddings,
the orderings of edges around any common vertex are compatible. Therefore $G_1$ and $G_2$ have compatible combinatorial planar embeddings.
\end{proof}

\subsection{Simultaneous planarity of $k$ graphs}

	In this subsection we consider a generalization of simultaneous planarity for $k$ graphs, when each vertex [edge] is either present in all
the graphs or present in exactly one of them. Let $G_1 = (V_1, E_1), G_2 = (V_2, E_2), \cdots, G_k = (V_k, E_k)$ be $k$ planar graphs such
that $V = V_i \cap V_j$ for all distinct $i,j$ and $E = E_i \cap E_j$ for all distinct $i,j$. As before, we call the edges and vertices of
$G=(V,E)$ {\em common} and all other edges and vertices {\em private}. We show that the algorithm of Section~\ref{sec:simplanarity}
can be readily extended to solve this generalized version, when the common graph $G$ is 2-connected.

  If $G_1, G_2, \ldots, G_k$ have simultaneous planar embeddings then they clearly have mutually compatible combinatorial planar embeddings.
Conversely if $G_1, G_2, \ldots, G_k$ have combinatorial planar embeddings that are mutually compatible, then, as before (see the beginning
of Section~\ref{sec:simplanarity}), we can first find
the planar embedding of the common subgraph and extend it to the planar embeddings of $G_1, G_2, \ldots, G_k$. Thus once again, the problem
is equivalent to finding combinatorial planar embeddings for $G_1, G_2, \ldots G_k$, that are mutually compatible.

Our algorithm for finding such an embedding works as before, inserting private vertices first, followed by common vertices.
The only difference comes in the unification step, where we have to take the intersection of $k$ projected trees instead of $2$. Doing this
is straightforward:
We initialize the intersection tree to be the first projected tree, and then insert the constraints of all the other trees into the intersection tree.
Finally, Lemma~\ref{compatibility} and its proof extend to multiple graphs.

\subsection{Running time}


We show that our algorithm can be implemented to run in linear time.  (In the generalization to $k$ graphs, the run time is linear in the sum of the sizes of the graphs---in other words, a common vertex counts $k$ times.)
Computing the reverse depth-first search ordering and the st-ordering are known to be 
doable in linear time \cite{ET}.
The first phase of our algorithm uses PQ-tree based planarity testing with a reverse depth-first search order~\cite{HT}, which
runs in linear time using the
efficient PQ-tree implementation of Booth and Lueker~\cite{Booth,BL}.
The re-rooting between the two phases needs to be done only once and
can easily be done in linear time.  
The second phase of our algorithm uses PQ-tree based planarity testing with an st-order, as discussed in Section~\ref{sec:planar}. 
This avoids re-rooting of PQ-trees, and thus also runs in
linear time~\cite{HT,BL,LEC}.  The other part of the second phase is the unification step, which
is only performed on the black trees, i.e.~the edges connecting to the current vertex. Note that these edges will get deleted and will not appear in subsequent stages. Thus we can explicitly store the black trees and the
intersection tree at every stage and allow the unification step to take time linear in the 
total size of both black trees.
The intersection algorithm can be implemented in linear time, as mentioned in Section~\ref{sec:int-PQ}.
The last thing that needs to
be implemented efficiently is the handling of the orientation variables. It is easy to see that once the equations are generated, they can be
solved in linear time, by repeatedly fixing the value of a free variable 
(true or false), finding all the equations that contain the variable
and recursively (say in a depth-first way) fixing all the variables so as to satisfy the equations. In the following sub-section we explain how
to also generate the variable equations in linear time. With this we conclude that the algorithm runs in linear time.


\subsubsection{Generating variable equations in linear time}
Note that in the implementation of PQ-trees (see Booth and Lueker~\cite{BL}) the children of a Q-node are stored in a doubly-linked list and only
the leftmost and rightmost children have parent pointers. Thus when two Q-nodes, one a child of the other, merge, we may not know
the variable and the orientation of the parent Q-node. To address this problem, we use labels on certain links of the doubly-linked
list as explained below and compute all the equations generated by the reductions of a unification step at the end of the step.

For any two adjacent child nodes $c_i$ and $c_{i+1}$ of a Q-node $q$, either the links $c_i \rightarrow c_{i+1}$ and $c_{i+1} \rightarrow c_i$
are labeled with $l$ and $\neg l$ (respectively), for some literal $l$, or they are both unlabeled. The underlying interpretation is that
$c_i$ appears before $c_{i+1}$ in the child ordering of $q$ iff $l$ is true. Thus the literals that we encounter when traveling from
one end to the other of the doubly-linked list, are all (implicitly) equal. When a Q-node is first created with two child nodes, say $x$ and $y$,
we create a variable associated with it and label the link from $x$ to $y$ with the variable and the link from $y$ to $x$ with the negation
of the variable.

During the algorithm, there are two types of equations: (1) Equations consisting of literals appearing in Q-nodes of distinct trees.
(These can be PQ-trees of the same graph, as happens in Case 2 of Section~\ref{sec:simplanarity} or PQ-trees of different graphs, as
happens in step 3 of Unification.)
(2) Equations consisting of literals
appearing in Q-nodes of a single tree (created during PQ-tree reductions).

Note that type 1 equations are essentially equations that constrain
the ordering of child nodes across the two Q-nodes, and we handle them as follows. Let $c_1, c_2$ be any two adjacent child nodes
of the first Q-node that are constrained to appear in the same order as child nodes $c'_1, c'_2$ of the second Q-node. If the links between
$c_1$ and $c_2$ are unlabeled, we create a new variable $x$ and label the links $c_1 \rightarrow c_2$ and $c_2 \rightarrow c_1$ with
$x$ and $\neg x$ respectively. Similarly, we label the links between $c'_1$ and $c'_2$, if they are unlabeled. Now we create the equation
that equates the literal associated with $c_1 \rightarrow c_2$ with the literal associated with $c'_1 \rightarrow c'_2$.

Type 2 equations happen when two Q-nodes merge. In this case the merged node contains literals from both the Q-nodes. Finding an equation
after each merge is costly, as we need to scan through each Q-node until we find a literal. However we can compute the equations in a
lazy fashion at the end of each unification step as follows. For every Q-node of the two black trees and the intersection tree obtained
from their projections (i.e.~the output tree of the unification step on the black trees), we pass from the first child to the last child
and equate all the literals encountered in the labels of the links. This clearly takes linear time in the size of the black trees.

\subsection{Simultaneous planar drawings}
\label{sec:drawing}


Our algorithm to test simultaneous planarity finds 
compatible combinatorial planar embeddings but does not find actual simultaneous drawings.  
In this section we make some remarks on finding simultaneous planar drawings.

Two natural goals are to minimize the number of bends along the polygonal curves representing the edges, and to minimize the number of crossings between the private edges of the two graphs.
We will show that we can find drawings where each edge has at most $n$ bends, and each pair of private edges intersects at most once.  Recall that $n = |V_1 \cap V_2|$.  In fact any edge of $G_1$, and any private edge of $G_2$ that joins private vertices of $G_2$ will be drawn as straight line segments.

Take a straight-line planar drawing of the first graph, $G_1$.
This includes a straight-line drawing of the common graph $G$, which we will extend to a planar drawing of $G_2$. 

The main idea is to draw the private parts of $G_2$ inside the faces of the drawing of $G$ using shortest paths.
(Of course, we must offset the shortest paths slightly to avoid coincident features.) 
A shortest path inside a polygon has two crucial properties: it bends only at reflex vertices of the polygon, and it intersects any line segment contained in the polygon at most once.  

Any bend on an edge of $G_2$ will occur in the neighborhood of a common vertex (a vertex of $G$).
We will maintain the following invariants:
(1) each edge of $G_2$ has at most one bend per common vertex;  
(2) the boundary of each face of $G_2$ has at most one reflex bend per common vertex; and 
(3) any edge of $G_1$ intersects any face of $G_2$ in a single line segment.
Property (1) implies that each edge of $G_2$ has at most $n$ bends.  Property (2) allows us to maintain property (1) as we draw more of $G_2$.   Property (3) ensures that private edges of $G_1$ cross private edges of $G_2$ at most once.
Initially, the only part of $G_2$ that is drawn is $G$, and the properties are satisfied.

\begin{figure}[htbp]
   \centering
   \includegraphics[width=6.5in]{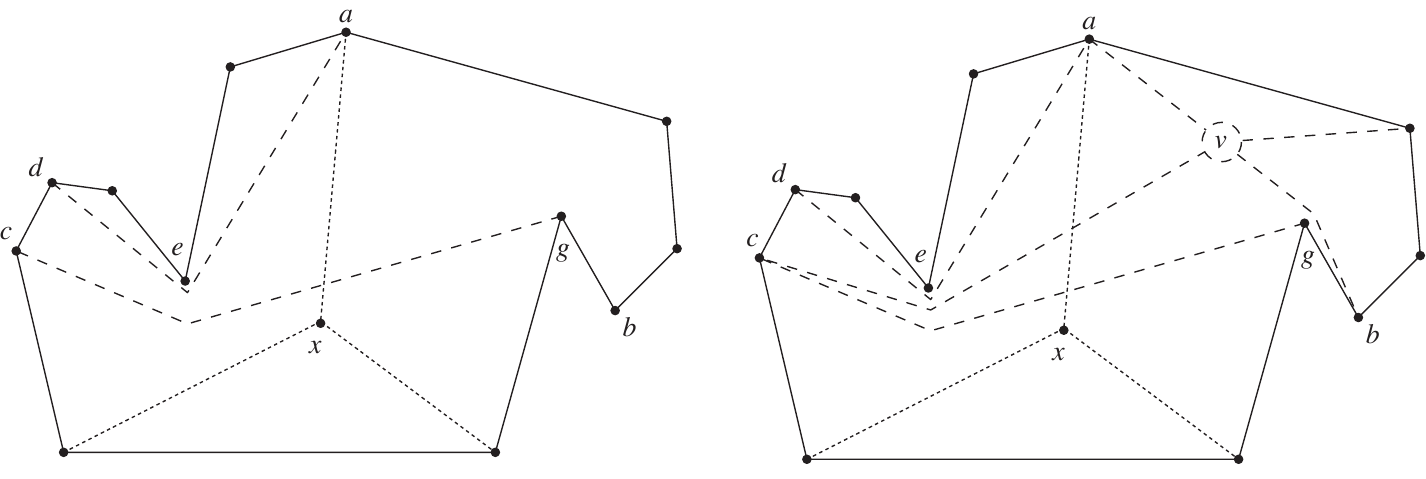}
   \caption{({\it left\/}) A face of $G$ (drawn with solid lines), some private edges of $G_1$ (dotted lines), and two private edges of $G_2$ (dashed lines) with endpoints in $G$: $(d,a)$ and $(c,g)$, each drawn with a bend in the neighborhood of the common vertex $e$.  Observe that although there are two bends near $e$, each face of $G_2$ has only one reflex bend near $e$. 
    ({\it right\/})  Adding private vertices of $G_2$: vertex $v$ is drawn along a shortest path from $a$ to $b$ and a component is drawn in a small disc around $v$.}
   \label{FIG:sim-draw}
\end{figure}

We first draw the private edges of $G_2$ whose endpoints are in $G$.  
Consider one such edge.  The combinatorial embedding specifies  the face, $f$, of the partial embedding of $G_2$ that contains the edge.
Face $f$ is drawn as a polygon and, by property (3), any  reflex vertex of the polygon is in the neighborhood of a unique common vertex. Draw the edge as a shortest path in $f$, pulling it slightly away from the boundary of $f$.  
Observe that this satisfies our invariants.

We now insert private vertices of $G_2$ and their incident edges.  
Consider a connected component $C$ of 
$G_2 - (V_1 \cap V_2)$, consisting of some private vertices and their induced edges (call this $C'$) together with some edges joining $C'$ to the common graph $G$.
The combinatorial embedding specifies the face, $f$, of the partial embedding of $G_2$ that contains $C$.   
Face $f$ is drawn as a polygon and, by property (3), any  reflex vertex of the polygon is in the neighborhood of a unique common vertex. 
We will contract $C'$ to a single vertex $v$ and draw $v$ (as a point) together with all the edges joining it to common vertices.  After that, we take a small disc around the point, small enough to avoid all edges of $G_1$,  and we make a straight line drawing of $C'$ inside the disc.  
Thus it suffices to consider adding a single private vertex $v$ and the edges joining it to common vertices.  
If $v$ is adjacent to only one common vertex, we can augment $G_2$ to make it adjacent to another common vertex.  
Let  $a$ and $b$ be two common vertices  adjacent to $v$.  Draw a shortest path inside $f$ from $a$ to $b$, and choose some point along that path to place $v$.  Draw the remaining edges incident to $v$ as shortest paths in the polygon, pulling the paths slightly away from the boundary of $f$ and from each other to avoid coincident features.  Observe that $v$ is not a reflex vertex in any of the resulting faces.
This construction satisfies our invariants: each edge has at most one bend per reflex vertex of $f$, the boundary of each resulting face has at most one reflex bend per vertex of $f$, and any edge of $G_1$ intersects any face of $G_2$ in a single line segment.

\section{Conclusions}
We have given a linear-time algorithm to test simultaneous planarity of two graphs 
whose common subgraph is 2-connected.
Our algorithm does not require the two graphs to have the same vertex set. 
Furthermore, our algorithm works for the more general case when there are $k$ graphs, any two of which 
have the same 2-connected subgraph in common.

 


We conjecture that simultaneous planarity of multiple graphs can be tested in polynomial time when any vertex/edge is either private to a single graph or common to all graphs.  Our algorithm solves the special case when the common graph is 2-connected.  
Note that simultaneous planarity is known to be NP-hard for three graphs  in general~\cite{GJPSS}.
The three graphs constructed in the reduction have edges in all possible combinations: edges private to each graph, edges common to all three graphs, and edges common to a pair of graphs but not the third, for every pair.   

Another interesting case of simultaneous planarity  for multiple graphs is when the graphs intersect in {\em layers}: the graphs are ordered $G_1, \ldots, G_k$, and every vertex/edge occurs in a consecutive subsequence of the graphs.  This layered structure arises from a graph changing over time, so long as a vertex/edge does not re-appear after it vanishes.   The complexity of layered simultaneous planarity is open, even for three graphs.  
For three graphs the layered structure is equivalent to the the condition that if a vertex/edge is in $G_1$ and $G_3$ then it is also in $G_2$.  

A weaker version of simultaneous planarity for two graphs requires only that each common vertex be represented by the same point---a common edge may be represented by different curves in the two planar drawings.  
Any pair of planar graphs can be represented this way:
after choosing arbitrary vertex positions, each graph can be drawn independently,  as shown by Pach and Wenger~\cite{PW}.
It would be interesting to 
generalize our algorithm to the case where some designated common edges are allowed to be represented by different curves in the two drawings.  
There is also a natural optimization version of the problem: given two planar graphs with some common vertices and edges, find planar drawings of the graphs so that every common vertex is drawn as the same point in the two drawings, and maximize the number of common edges that are drawn as the same curve in the two drawings.

Lastly, both SPQR-trees~\cite{BT} and PQ-trees have been used for many planarity related problems and both give rise to distinct representations of all planar embeddings of a (2-connected) planar graph~\cite{BT,CNSO,HT}. Understanding the strength of both representations and how they relate better is an important question. Comparing the PQ-tree approach to simultaneous planarity taken here with the SPQR-tree approach in \cite{ADFPR} or \cite{FGJMS} might be a good start. The recent manuscript \cite{BR} makes very interesting progress in this direction.

\bibliography{report.bib}

\end{document}